\documentclass[%
article, two column,
superscriptaddress,
preprintnumbers,
nofootinbib,
 amsmath,amssymb,amsthm,
 aps,
 prd,
floatfix,
longbibliography.usenames, dvipsnames,10pt]{revtex4-2}
\usepackage{amsthm}
\usepackage[main=english]{babel}
\usepackage{quantikz}

\usepackage[toc,page]{appendix}
\usepackage{amssymb}
\usepackage{graphicx} 
\usepackage{comment}

\usepackage{acro}
\DeclareAcronym{bch}{
  short = BCH ,
  long  = Baker-Campbell-Hausdorff
}
\DeclareAcronym{qsp}{
  short = QSP ,
  long  = Quantum Singular Processing
}
\DeclareAcronym{qsvt}{
  short = QSVT,
  long  = Quantum Singular Value Transformation
}
\DeclareAcronym{lcu}{
  short = LCU,
  long  = Linear Combination of Unitaries
}
\DeclareAcronym{bliss}{
  short = BLISS,
  long  = Block-Invariant Symmetry Shift
}
\DeclareAcronym{lhst}{
  short = LHST,
  long  = Local Hilbert-Schmidt Test
}

\usepackage{subfig}
\usepackage{comment}
\usepackage{dcolumn}
\usepackage{siunitx}
\usepackage{dsfont}

\newcommand\figref{Fig.~\ref}

\usepackage[utf8]{inputenc}
\usepackage[toc,page]{appendix}

\usepackage{ragged2e}
\usepackage{graphicx}
\usepackage{hyperref}
\hypersetup{colorlinks}
\usepackage{soul}

\usepackage{physics}
\usepackage{bbold}
\usepackage{float}
\usepackage{hyphenat}

\usepackage[justification=justified]{caption}
\usepackage{orcidlink}
\usepackage{amssymb}
\newtheorem{theorem}{Theorem}

\definecolor{azure(colorwheel)}{rgb}{0.0, 0.5, 1.0}
\definecolor{babyblue}{rgb}{0.54, 0.81, 0.94}
\definecolor{cornflowerblue}{rgb}{0.39, 0.58, 0.93}
\definecolor{bordeaux}{RGB}{220,20,60}

\hypersetup{colorlinks=true,citecolor=azure(colorwheel),linkcolor=bordeaux,filecolor=bordeaux,urlcolor=azure(colorwheel),breaklinks=true}

\newcommand{\pushright}[1]{\ifmeasuring@#1\else\omit\hfill$\displaystyle#1$\fi\ignorespaces}

\begin{document}

\title{When is randomization advantageous in quantum simulation?}
\author{Francesco Paganelli\orcidlink{0009-0003-5617-2637}} 
\email{francesco.paganelli@cern.ch}
\affiliation{European Organization for Nuclear Research (CERN), Geneva 1211, Switzerland}
\affiliation{Leiden Institute of Physics (LION), Leiden University, Leiden 2333 CA, The Netherlands}
\author{Michele Grossi\orcidlink{0000-0003-1718-1314}}
\affiliation{European Organization for Nuclear Research (CERN), Geneva 1211, Switzerland}
\author{Andrea Giachero\orcidlink{0000-0003-0493-695X}}
\affiliation{Department of Physics, University of Milano Bicocca, Milan, I-20126, Italy}
\affiliation{INFN - Milano Bicocca, Milan, I-20126, Italy}
\affiliation{Bicocca Quantum Technologies (BiQuTe) Centre, Milan, I-20126, Italy}
\author{Thomas E. O'Brien}
\affiliation{Google Quantum AI, Munich, Germany}
\affiliation{Leiden Institute of Physics (LION), Leiden University, Leiden 2333 CA, The Netherlands}
\author{Oriel~Kiss\orcidlink{0000-0001-7461-3342}}
\email{oriel.kiss@meetiqm.com}
\affiliation{Google Quantum AI, Munich, Germany}
\affiliation{European Organization for Nuclear Research (CERN), Geneva 1211, Switzerland}
\affiliation{Department of Nuclear and Particle Physics, University of Geneva, Geneva 1211, Switzerland}
\affiliation{IQM Quantum Computers, Georg-Brauchle-Ring 23-25, 80992 Munich, Germany}

\noaffiliation
\date{\today}

\begin{abstract}

We study the regimes in which Hamiltonian simulation benefits from randomization. We introduce a sparse-QSVT construction based on composite stochastic decompositions, where dominant terms are treated deterministically and smaller contributions are sampled stochastically. Crucially, we analyze how stochastic and approximation errors propagate through block-encoding and QSVT procedures.
To benchmark this approach, we construct ensembles of random Hamiltonians with controlled coefficient dispersion, locality, and number of terms --- designed to favor randomization, and therefore providing an upper bound on its practical advantage.
For Hamiltonians with many terms and highly inhomogeneous coefficient distributions, randomized methods reduce gate counts by up to an order of magnitude. However, this advantage is confined to moderate-precision regimes: as the target error decreases, deterministic methods become more efficient, with a crossover near $\varepsilon \sim 10^{-3}$. Although this regime partially overlaps with quantum chemistry Hamiltonians, realistic systems exhibit additional structure --- such as commutation patterns --- not captured by our model, which are expected to further favor deterministic approaches.

\end{abstract}

\maketitle

\section{Introduction}

The simulation of quantum systems stands among the most promising applications of quantum 
computing~\cite{feynman}. While certain models admit efficient classical treatment, generic 
interacting quantum systems remain computationally intractable. Many-body systems with strong 
correlations or fermionic statistics are particularly challenging: they suffer from exponential 
Hilbert-space growth or, in the case of Monte Carlo approaches, from the sign 
problem~\cite{sign_Troyer}.

Quantum algorithms offer a natural alternative by implementing Hamiltonian time evolution 
directly on quantum hardware. For broad classes of systems --- including local 
Hamiltonians~\cite{Lloyd_1996} and structured many-body models~\cite{ChildsToward} --- the 
required resources scale polynomially in system size and evolution time. This has motivated 
applications across nuclear physics~\cite{Dumitrescu_2018,roggero2019,PRC_Kiss,gqdy-dvps}, 
condensed-matter physics~\cite{andersen2024thermalization,Hofstetter_2018,quantum_simulations_XXZ,miessen2024benchmarking,Kiss2025}, quantum field 
theory~\cite{QS_QFT_Preskill,QS_Schwinger_Lougovski,Klco_2022}, and quantum 
chemistry~\cite{Su2021nearlytight,1stq_Babbush}, 
with representative tasks including energy estimation via quantum phase 
estimation~\cite{QPE-Lloyd} and some near-term variation \cite{LT,PRXQ_Blunt23,EFTQC_practice,QETU}, reaction-rate calculations~\cite{reaction}, correlation 
functions~\cite{two_point_roggero,Hofstetter_2018,quantum_simulations_XXZ,moment_kiss}, 
neutrino dynamics~\cite{PRD_neutrino,neutrino_simulation_amitrano,kiss2025neutrino}, and 
nuclear scattering~\cite{neutrio_nucleus_roggero,du2021,Illa_2022}.

Despite this progress, selecting an appropriate simulation strategy for a given Hamiltonian 
remains nontrivial. Asymptotic error bounds provide useful guidance, but typically depend on 
coarse spectral parameters and can fail to reflect practical performance in physically 
relevant regimes. This gap is especially pronounced for Hamiltonians composed of many Pauli 
terms with strongly inhomogeneous coefficients --- a structure common in quantum chemistry 
(see App.~\ref{app:StatisticalHam}). In such cases, it is not clear \emph{a priori} whether 
deterministic or randomized simulation methods offer a practical advantage, particularly once 
additional structural features such as commutation patterns are taken into account.


The central goal of this work is to identify which structural properties of a Hamiltonian render randomized simulation methods advantageous, and to delineate the regimes in which such advantages are practically relevant. To this end, we study ensembles of random Hamiltonians in which the number of terms, coefficient variance, and locality are varied in a controlled manner. These ensembles are designed to isolate the statistical features most favorable to randomization, and therefore provide an upper bound on its potential benefit.

We benchmark deterministic product-formula methods against randomized alternatives --- namely qDRIFT~\cite{QDrift} and SparSto~\cite{sparsto} --- as well as \ac{qsvt}-based algorithms~\cite{QSP_Chuang}. Alongside this comparison, we develop a general framework for analyzing how errors propagate through the block-encoding and polynomial transformation steps of \ac{qsvt}, quantifying the impact of approximate or stochastic oracle implementations on overall simulation fidelity.

Building on this analysis, we introduce a sparse \ac{qsvt} variant, in which the Hamiltonian block-encoding is replaced by a composite stochastic decomposition. Dominant terms are treated deterministically, while the remaining contributions are sampled, reducing the expected block-encoding cost at the expense of stochastic errors that accumulate through the polynomial sequence.

Taken together, our results provide a systematic comparison of deterministic and randomized simulation methods across a structured family of random Hamiltonians, and demonstrate that any practical advantage of randomization is confined to a well-defined, moderate-precision regime.
The paper is organized as follows. Section~\ref{sec:toolbox} reviews product formulas, randomized techniques, and \ac{qsvt}. Random QSVT methods (previous and our work) are covered in Section~\ref{sec:randQSVT}. Section~\ref{sec:Model} describes the Hamiltonian ensembles, while Section~\ref{sec:results1} presents the empirical error-propagation study and Section~\ref{sec:results2} the comparative of deterministic and randomized methods.

\section{Background on Quantum Simulation Methods}
\label{sec:toolbox}

We consider the problem of simulating real-time evolution generated by a time-independent Hamiltonian
\[
H = \sum_{j=1}^{L} c_j H_j,
\qquad
\mathcal{E}(t) = e^{-iHt},
\]
where each $H_j$ is Hermitian and normalized such that $\|H_j\|\le 1$. We denote by $\lambda = \sum_j |c_j|$ the one norm of the Hamiltonian. The integer $L$ denotes the number of terms in the chosen decomposition, and $t\in\mathbb{R}$ the total evolution time. The decomposition is assumed to reflect the implementation structure of the target architecture, so that exponentials of individual terms $H_j$ can be implemented efficiently.

The simulation task is to approximate the exact time evolution operator $\mathcal{E}(t)$ to target accuracy $\varepsilon$, while minimizing quantum resources such as circuit depth and total gate count. In this work, we focus on two broad and structurally distinct paradigms.

The first class comprises \emph{product-formula} methods, which approximate $\mathcal{E}(t)$ by composing short-time evolutions under individual Hamiltonian terms. Their performance is governed by commutator structure and time discretization error.

The second class consists of \emph{polynomial} methods based on Quantum Singular Value Transformation (QSVT). These approaches approximate the exponential function directly by a polynomial in (a suitably block-encoded version of) $H$, thereby avoiding time slicing.

\subsection{Product formulas}

Product-formula methods constitute one of the earliest and most extensively studied approaches to Hamiltonian simulation. They were first proposed in the context of quantum computation by \textcite{Lloyd_1996}, who suggested exploiting the Lie--Trotter formula~\cite{trotterOriginal} to approximate the time-evolution operator by a product of exponentials of simpler Hamiltonian components. This procedure, commonly referred to as \emph{Trotterization}, was later generalized by Suzuki to arbitrarily high order \cite{Suzuki1976, SuzukiHigherOrderGene}.

We assume a decomposition of the Hamiltonian into efficiently implementable terms, where the operators $H_j$ are typically Pauli strings. The exact time-evolution operator
\(
e^{-iHt}
\)
is approximated by $r$ repetitions of a basic product formula. The first-order (Lie--Trotter) step is
\begin{equation}
\label{eq:Trotter1}
S_1(t/r)
= \prod_{j=1}^{L} e^{-i c_j H_j t / r},
\end{equation}
while the second-order is given by

\begin{equation}
S_{2}(t/r) = \prod_{j=1}^{L} e^{-i c_j H_j t/2r}
   \prod_{j=L}^{1} e^{-i c_j H_j t/2r}, 
\end{equation}
The total time evolution operator is then $S_p(t/r)^r$, with error $\mathcal{O}(\frac{t^{p+1}}{r^{p}})$.
 A systematic analysis of Trotter errors based on commutators was developed in~\textcite{ChildsTrotterError}, refining earlier bounds based on truncations of the Baker--Campbell--Hausdorff expansion~\cite{Raeisi_2012, TrotterOvershootQC}. 
We notice that such bounds are often loose in practice~\cite{EmpricalTrotterFH, ChildsToward}, as they describe worst-case errors.

The circuit depth scales both with the number of steps $r$ and with the number of Hamiltonian terms $L$ applied per step. This motivates stochastic constructions that reduce the number of exponentials implemented at each step. Prominent examples include qDRIFT~\cite{qDRIFT_first} and SparSto~\cite{sparsto}, which replace deterministic iteration over all terms by randomized sampling. We discuss these methods below.

\subsection{Randomization in Product Formulas}
\label{sec:Randomization}

Randomization is a useful tool to reduce circuit depth, by trading with some variance. For instance, randomizing the ordering at each Trotter step reduces the required number of steps for a fixed target accuracy~\cite{ChildsRandomTrotter}. In the following, we consider methods that construct an unbiased estimator of the time evolution operator, and relies on concentration effects to control the simulation error. This approach is believed to be particularly advantageous when $L$ is large or when the coefficients $\{|c_j|\}$ exhibit strong inhomogeneity. 

\subsubsection{qDRIFT}
The qDRIFT protocol~\cite{qDRIFT_first}, subsequently refined in
Refs.~\cite{QDRift_caltech, qSWIFT, qDRIFT_Kiss, david2025tighter_qdrift},
approximates time evolution by sampling Hamiltonian terms according to
their relative weights. We define the sampling probabilities as
\begin{equation}
p_j = \frac{|c_j|}{\lambda}.
\end{equation}
At each step, an index $j$ is sampled independently according to the
distribution $\{p_j\}$, and the corresponding unitary
\begin{equation}
\exp\!\left(
-i\, \frac{c_j}{p_j} H_j \frac{t}{N}
\right)
\end{equation}
is applied. After $N$ independent samples, the resulting approximation
to the time-evolution operator is
\begin{equation}
e^{-iHt}
\approx
\prod_{i=1}^{N}
\exp\!\left(
-i\, \frac{c_{j_i}}{p_{j_i}} H_{j_i} \frac{t}{N}
\right).
\end{equation}
Each step involves only a single Hamiltonian term, so the gate count per step is (explicitly) independent of $L$. The spectral-norm error satisfies~\cite{qDRIFT_first, qDRIFT_errors}
\begin{equation}
\label{eq:qDRIFTerror}
\left\|
e^{-iHt}
-
\prod_{i=1}^{N}
\exp\!\left(
-i\, \frac{c_{j_i}}{p_{j_i}} H_{j_i} \frac{t}{N}
\right)
\right\|
\le
\mathcal{O}\!\left(
\frac{\lambda^2 t^2}{N}
\right).
\end{equation}

\subsubsection{SparSto}

SparSto~\cite{sparsto} interpolates between deterministic Trotterization and qDRIFT by sampling a sparsification $\hat{H}$ of the Hamiltonian at each time step, where a sparsification is composed by only a subset of all the terms. This is similar to the hybrid method from Refs. ~\cite{Hybrid_Wiebe, gunther2025phase,qDRIFT_Kiss,hybrid_jin}, but with an error bound that explicitly depend on the probability distribution.  For a probability vector $\vec{p} = (p_1,\ldots,p_L)$ with $0 < p_j \le 1$, define a random Hamiltonian estimator
\[
\hat{H}
=
\sum_{j=1}^{L}
\frac{c_j}{p_j} H_j \, \xi_j,
\]
where $\xi_j \sim \mathrm{Bernoulli}(p_j)$ are independent random variables. Then $\mathbb{E}[\hat{H}] = H$. The expected number of terms per step is
\[
\mu = \sum_{j=1}^{L} p_j,
\]
which interpolates between $\mu=L$ (deterministic product formula) and $\mu=1$ (qDRIFT).

In each time slice of duration $\delta t$, a fresh realization of $\hat{H}$ is sampled and exponentiated via a first-order product formula. Repeating this procedure yields a stochastic approximation to $e^{-iHt}$.

A key design choice is the probability vector $\vec{p}$. Following~\cite{sparsto}, we partition the terms into a set $A$ of dominant contributions and its complement $\bar{A}$. For $j \in A$, we set $p_j = 1$, while for $j \in \bar{A}$ we set $p_j = \alpha |c_j|$, with $\alpha$ chosen so that $p_j \le 1$ for all $j$. This choice exploits the typically heavy-tailed distribution of coefficients in molecular Hamiltonians.

Denoting by $\mathcal{S}(t)$ the resulting quantum channel and by $e^{t\mathcal{L}}$ the exact Liouvillian evolution, the diamond-norm error
\[
\varepsilon_{\mathrm{SparSto}}
=
\|\mathcal{S}(t) - e^{t\mathcal{L}}\|_\diamond
\]
can be upper bounded by ~\cite{sparsto}
\begin{equation}
\label{eq:SparStoError}
\varepsilon_{\mathrm{SparSto}}
=
\frac{2 t^2 \mu}{G} \|\vec{u}\|_1
+
\frac{4 t^{3} \mu^{2}}{3 G^{2}} K
+
\mathcal{O}\!\left(
\frac{t^{4} \mu^{3}}{G^{3}}
\right),
\end{equation}
where
\begin{equation}
\label{eq:u}
\vec{u}
=
\bigl(
(\tfrac{1}{p_1}-1)c_1^2,\ldots,
(\tfrac{1}{p_L}-1)c_L^2
\bigr),
\end{equation}
$G$ is the total number of gates and $K$ collects higher-order contributions depending on $\vec{p}$ and $\vec{c}$ as defined in~\cite{sparsto}.

The bound is minimized for $p_j=1$ for all $j$, corresponding to deterministic Trotterization. However, this choice maximizes $\mu=L$, and therefore the per-step circuit depth. By tuning $\vec{p}$, SparSto enables explicit control over $\mu$, allowing a trade-off between per-step cost and stochastic error. In regimes with heavy-tailed coefficient distributions, this trade-off can reduce overall resource requirements compared to both deterministic product formulas and fully stochastic qDRIFT. 

While not considered explicitly in this work, several other approaches that leverage randomization within product-formula constructions have been proposed, including Te-PAI \cite{Granet2024,Kiumi2025} and randomized multi-product formulas \cite{Faehrmann2022randomizingmulti}.
\subsection{QSVT-based simulation}

Quantum Singular Value Transformation (QSVT) provides a fundamentally different approach to Hamiltonian simulation based on polynomial transformations of block-encoded operators. It generalizes the Quantum Signal Processing (QSP) framework of~\textcite{LowQubitization, LowQSP}, originally formulated for single-qubit unitaries, to arbitrary matrices via their singular values~\cite{GilBeyond, unification}. In this framework, a polynomial satisfying specific magnitude and parity constraints is applied to the singular values of a matrix encoded within a larger unitary.

A QSVT circuit alternates between two types of operations. The first is a block-encoding $U_A$, a unitary whose action embeds a matrix $A$ into a subspace identified by projectors $\Pi$ and $\tilde\Pi$. The second consists of projector-controlled phase gates $\Pi_{\phi_j}$ and $\tilde\Pi_{\phi_j}$, where the phase angles $\vec{\phi} = \{\phi_j\}$ are computed classically from the coefficients of the target polynomial. Each such operation applies a phase $e^{i\phi_j}$ to the block-encoding subspace and $e^{-i\phi_j}$ to its orthogonal complement. These operations can be written as~\cite{GilBeyond}
\begin{equation}\label{eq:ProjectorControlledPhases}
    \Pi_\phi = e^{i\phi(2\Pi - \mathbb{I})},
    \qquad
    \tilde{\Pi}_\phi = e^{i\phi(2\tilde{\Pi} - \mathbb{I})}.
\end{equation}



Hamiltonian simulation via QSVT proceeds by approximating the function
\[
e^{-ixt} = \cos(xt) - i \sin(xt)
\]
with a polynomial that can be implemented using the above construction. Following~\cite{GilBeyond}, the cosine and sine functions are expanded using Jacobi--Anger expansions,
\begin{align}\label{eq:JAExpansion}
\begin{split}
    \cos(xt) &\approx J_0(t) + 2 \sum_{k=1}^{d} (-1)^k J_{2k}(t) T_{2k}(x), \\
    \sin(xt) &\approx 2 \sum_{k=0}^{d} (-1)^k J_{2k+1}(t) T_{2k+1}(x),
\end{split}
\end{align}
where $J_k$ denotes the $k$th Bessel function of the first kind and $T_k$ is the $k$th Chebyshev polynomial. The expansions are truncated at degree $d$ to achieve a target simulation error $\varepsilon$, and the resulting coefficients are used to compute the phase sequence $\vec{\phi}$.

To apply QSVT to a Hamiltonian $H$, one must first construct a block-encoding. Assuming that $H$ admits a linear combination of unitaries (LCU) decomposition,
\[
H = \sum_{j=0}^{L-1} c_j H_j,
\]
a block-encoding can be implemented using the standard LCU primitive~\cite{LCU_childs}. This construction requires $a = \lceil \log_2 L \rceil$ ancilla qubits, a \textsc{Prepare} oracle $V$, and a \textsc{Select} oracle $S$. The \textsc{Prepare} unitary acts as
\begin{equation}
    V \ket{0}^{\otimes a}
    = \frac{1}{\sqrt{\lambda}} \sum_{j=0}^{L-1} \sqrt{|c_j|} \ket{j},
\end{equation}
where $\lambda = \|H\|_1 = \sum_j |c_j|$. The \textsc{Select} oracle applies the corresponding Hamiltonian term conditioned on the ancilla state,
\begin{equation*}
    S = \sum_{j=0}^{L-1} \ket{j}\bra{j} \otimes H_j.
\end{equation*}
Combining these operations yields the block-encoding
\begin{equation*}
    W = (V^\dagger \otimes I_n)\, S \, (V \otimes I_n)
    =
    \begin{pmatrix}
        H / \lambda & * \\
        * & *
    \end{pmatrix},
\end{equation*}
where $I_n$ denotes the identity on the $n$ system qubits.

Due to this normalization, simulating the evolution under $H$ for time $t$ requires implementing QSVT for an effective evolution time $\lambda t$. The truncation degree $d$ required to approximate $e^{-ix\lambda t}$ by a polynomial with error $\varepsilon_\text{poly}$ is defined implicitly by~\cite{unification}
\begin{equation}\label{eq:QSVT_poly_error}
    \varepsilon_\text{poly} = \left(\frac{\lambda t}{d(\varepsilon_\text{poly}, \lambda t)}\right)^{d(\varepsilon_\text{poly}, \lambda t)}.
\end{equation}
As $\varepsilon_\text{poly}$ is the only source of error, the total simulation error satisfies $\varepsilon \approx \varepsilon_\text{poly}$, and the required truncation degree scales asymptotically as~\cite{LowQSP, unification}
\begin{equation}\
\label{eq:JATruncationDegree}
    d =
    \Theta\!\left(
        \lambda t +
        \frac{\log(1/\varepsilon)}{
        \log\!\left(e + \frac{\log(1/\varepsilon)}{\lambda t}\right)}
    \right).
\end{equation}

QSVT exhibits asymptotically superior error scaling, depending only logarithmically on $1/\varepsilon$, whereas product-formula methods scale polynomially as $\varepsilon^{-1/p}$. In practice, QSVT typically incurs larger constant overheads due to the depth of the LCU circuit and the additional ancilla qubits required.

\section{Randomized QSVT}
\label{sec:randQSVT}
\subsection{Previous work}

Randomized variants of quantum singular value transformation and related techniques have already been explored in the literature. For example, Ref.~\cite{RandomQSVT} proposes a randomized version of QSVT that avoids constructing a full block-encoding of the target operator. Instead of assuming coherent access to a block-encoding of $H$, the method assumes sampling access to terms $H_k$ from an LCU decomposition
\[
H = \sum_k \lambda_k H_k .
\]
The authors introduce the two--by--two operator
\begin{equation}
U =
\begin{pmatrix}
cI & sH \\
sH^\dagger & -cI
\end{pmatrix},
\qquad c^2 + s^2 = 1,
\end{equation}
which acts as a randomized analogue of a block encoding. By sampling $H_k$ according to the distribution $|\lambda_k|$, one can implement unitaries
\begin{equation}
U_k =
\begin{pmatrix}
cI & sH_k \\
sH_k^\dagger & -cI
\end{pmatrix},
\end{equation}
that satisfy $\mathbb{E}[U_k] = U$. Replacing each occurrence of $U$ in the alternating phase sequence of QSVT with an independent sample $U_k$ yields a randomized circuit whose expectation reproduces the desired polynomial transformation of $H$.

A related approach is developed in Ref.~\cite{stat_QPE}, where the time-evolution operator is expressed as a linear combination of unitaries. Assuming a Hamiltonian written as a convex combination of Pauli operators, the construction begins by decomposing the evolution operator as $e^{iHt}=(e^{iHt/r})^r$. Each short-time propagator is then Taylor expanded and regrouped into linear combinations of Pauli rotations, resulting in an LCU representation that can be sampled to estimate expectation values involving time evolution.

Finally, Ref.~\cite{SCU} proposes a stochastic Hamiltonian simulation algorithm based on a convex decomposition of the Taylor expansion of the time-evolution operator. In this approach, the truncated expansion can be rewritten as
\begin{equation}
e^{-iHt} = L_c \sum_j p_j P'_j + p_1 + L_s^2 \sum_k p_k e^{i\theta P'_k},
\end{equation}
where \(P'_j\) are Pauli strings (up to a sign) and the coefficients define a probability distribution after normalization. This formulation enables \emph{convex Taylor sampling}, where the time evolution is approximated by repeatedly sampling unitary terms from the resulting convex combination.

\subsection{Sparse QSVT}
\label{sec:sparse_qsvt}

In the following, we introduce a sparse-QSVT construction in which the Hamiltonian $H$ appearing in the LCU block-encoding is replaced by a stochastic estimator $\hat{H}$. Conceptually, this approach is related to randomized QSVT frameworks~\cite{RandomQSVT}. However, rather than randomizing the block-encoding directly, we construct $\hat{H}$ using the SparSto technique introduced in Ref.~\cite{sparsto}. Our primary focus is not the construction itself, but rather the analysis of how stochastic and approximation errors propagate through the LCU and QSVT procedures.

In the following, all error bounds should be interpreted as worst-case upper bounds on the expected error, assuming independent sampling in each use of the stochastic estimator. Correlations between different applications of the block-encoding are neglected, and may lead to improved performance in practice.

\begin{theorem}[LCU error propagation]
\label{theo:MainLCUPropagation}
Let $V$ be the exact \textsc{Prepare} oracle and $\tilde V$ an approximation satisfying
\begin{equation}
\label{eq:prep_error}
\varepsilon_{\mathrm{prep}} = \|V-\tilde V\|.
\end{equation}
Then the induced error on the LCU block-encoding
\begin{equation}
\tilde W = (\tilde V^\dagger \otimes I)\, S\, (\tilde V \otimes I)
\end{equation}
satisfies
\begin{equation}
\label{eq:lcu_error_bound}
\varepsilon_{\mathrm{LCU}}
=
\|W-\tilde W\|
\le
2\,\varepsilon_{\mathrm{prep}}.
\end{equation}
An analogous linear bound holds for \textsc{Select}.
\end{theorem}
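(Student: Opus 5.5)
The plan is to write the difference $W-\tilde W$ as a telescoping sum in which exactly one factor differs between consecutive terms, and then apply the triangle inequality together with submultiplicativity of the operator norm. Concretely, I would insert the mixed term $(V^\dagger\otimes I)\,S\,(\tilde V\otimes I)$ and write
\begin{equation*}
W-\tilde W = (V^\dagger\otimes I)\,S\,\bigl((V-\tilde V)\otimes I\bigr) + \bigl((V-\tilde V)^\dagger\otimes I\bigr)\,S\,(\tilde V\otimes I).
\end{equation*}
Taking norms then bounds each term by a product of three operator norms.

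The key steps, in order, are as follows.
\begin{enumerate}
\item Since $V$ and $\tilde V$ are unitaries, $\|V^\dagger\otimes I\| = \|\tilde V\otimes I\| = 1$.
\item The \textsc{Select} oracle $S=\sum_j \ket{j}\bra{j}\otimes H_j$ is block diagonal, so $\|S\| = \max_j\|H_j\|\le 1$. In the standard LCU setting the $H_j$ are unitary, so $S$ is itself unitary.
\item For the tensor factor, $\|(V-\tilde V)\otimes I\| = \|V-\tilde V\| = \varepsilon_{\mathrm{prep}}$, and the same holds for the adjoint because $\|A^\dagger\|=\|A\|$.
\end{enumerate}
Combining these gives $\|W-\tilde W\|\le \varepsilon_{\mathrm{prep}}+\varepsilon_{\mathrm{prep}} = 2\varepsilon_{\mathrm{prep}}$, which is Eq.~\eqref{eq:lcu_error_bound}.

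For the \textsc{Select} statement, I would take an approximate $\tilde S$ with $\|S-\tilde S\|=\varepsilon_{\mathrm{sel}}$. The same sandwich argument then gives $\|(V^\dagger\otimes I)(S-\tilde S)(V\otimes I)\|\le \varepsilon_{\mathrm{sel}}$, a linear bound with constant $1$. If both oracles are approximated simultaneously, a three-term telescoping gives $2\varepsilon_{\mathrm{prep}}+\varepsilon_{\mathrm{sel}}$.

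The argument is essentially routine. The only point needing care is the norm of $S$, and possibly of $\tilde V$, when the approximations are not exactly unitary. If $\tilde V$ were merely close to unitary, I would bound $\|\tilde V\|\le 1+\varepsilon_{\mathrm{prep}}$, which yields $2\varepsilon_{\mathrm{prep}}+\varepsilon_{\mathrm{prep}}^2$. Under the theorem's setting, however, $\tilde V$ is a unitary circuit, so the clean constant $2$ holds. Similarly, $\|S\|\le1$ relies only on the normalization $\|H_j\|\le 1$ assumed at the start of Sec.~\ref{sec:toolbox}, so no unitarity of the $H_j$ is needed for the bound.
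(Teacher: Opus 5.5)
Your proof is correct and is essentially the paper's argument. The paper also telescopes $\tilde V^\dagger S\tilde V - V^\dagger S V$ and bounds each piece by $\varepsilon_{\mathrm{prep}}$ via the triangle inequality and submultiplicativity; it merely inserts the mirror-image mixed term $\tilde V^\dagger S V$ instead of your $V^\dagger S\tilde V$, and it handles \textsc{Select} by the same sandwich bound $\varepsilon_{\mathrm{sel}}$. Your remarks that $\tilde V$ must be unitary (the paper silently uses $\|\tilde V^\dagger S\|\le 1$ while citing only the unitarity of $V$ and $S$) and that $\|S\|\le 1$ needs only $\|H_j\|\le 1$ are small but welcome tightenings.
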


The first source of error arises from approximations in the LCU primitives. Since the block-encoding is constructed by conjugating the \textsc{Select} operator with the \textsc{Prepare} unitary, any imperfection in these components propagates to the resulting block-encoding. The above theorem shows that this propagation is stable: the induced error grows at most linearly with the preparation error.

\begin{theorem}[QSVT error amplification]
\label{theo:MainQSVTPropagation}
Let $U_A$ be an exact block-encoding and $\tilde U_A$ an approximation satisfying
\begin{equation}
\label{eq:be_error}
\varepsilon_{\mathrm{BE}} = \|U_A-\tilde U_A\|.
\end{equation}
Then a degree-$d$ \ac{qsvt} sequence obeys
\begin{equation}
\label{eq:qsvt_error_bound}
\varepsilon_{\mathrm{QSVT}}
\le
d\,\varepsilon_{\mathrm{BE}}
+
\varepsilon_{\mathrm{poly}},
\end{equation}
where $\varepsilon_{\mathrm{poly}}$ denotes the polynomial truncation error.
\end{theorem}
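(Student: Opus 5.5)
We prove Theorem~\ref{theo:MainQSVTPropagation} by a telescoping (hybrid) argument over the $d$ uses of the block-encoding, followed by a triangle inequality that separates the implementation error from the polynomial approximation error. First write the ideal degree-$d$ QSVT circuit as an alternating product
\begin{equation*}
U_{\vec\phi} = \Pi_{\phi_1} U_A \tilde\Pi_{\phi_2} U_A^\dagger \Pi_{\phi_3} U_A \cdots,
\end{equation*}
containing exactly $d$ occurrences of $U_A$ or $U_A^\dagger$, interleaved with the projector-controlled phases of Eq.~\eqref{eq:ProjectorControlledPhases}. Let $\tilde U_{\vec\phi}$ be the same circuit with every $U_A$ replaced by $\tilde U_A$. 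The phase gates are unitary and identical in both circuits, and they need not be perturbed.

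The key step is the hybrid bound $\|U_{\vec\phi} - \tilde U_{\vec\phi}\| \le d\,\varepsilon_{\mathrm{BE}}$. Define intermediate circuits $C_k$, $k=0,\dots,d$, in which the first $k$ block-encoding slots use $\tilde U_A$ and the remaining ones use $U_A$, so that $C_0=U_{\vec\phi}$ and $C_d=\tilde U_{\vec\phi}$. Consecutive circuits $C_{k-1}$ and $C_k$ differ in a single slot. Writing $C_{k-1}-C_k = X(U_A^{(\dagger)}-\tilde U_A^{(\dagger)})Y$ with $X,Y$ unitary, unitary invariance of the spectral norm gives
\begin{equation*}
\|C_{k-1}-C_k\| = \|U_A^{(\dagger)}-\tilde U_A^{(\dagger)}\| = \varepsilon_{\mathrm{BE}},
\end{equation*}
where we use $\|U_A^\dagger-\tilde U_A^\dagger\|=\|U_A-\tilde U_A\|$. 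Summing the $d$ telescoping differences yields $d\,\varepsilon_{\mathrm{BE}}$.

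Next, project onto the relevant block. The quantity of interest is the block $\tilde\Pi\,\tilde U_{\vec\phi}\,\Pi$ (or the linear combination of such blocks used to assemble $\cos$ and $\sin$). Projectors have norm one, so the block error is bounded by the full-unitary error:
\begin{equation*}
\|\tilde\Pi(U_{\vec\phi}-\tilde U_{\vec\phi})\Pi\|\le d\,\varepsilon_{\mathrm{BE}}.
\end{equation*}
The exact circuit implements $P(H/\lambda)$, where $P$ is the truncated Jacobi--Anger polynomial of Eq.~\eqref{eq:JAExpansion}, and by Eq.~\eqref{eq:QSVT_poly_error} we have $\|P(H/\lambda)-e^{-iHt}\|\le\varepsilon_{\mathrm{poly}}$. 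The triangle inequality then gives
\begin{equation*}
\|\tilde\Pi\tilde U_{\vec\phi}\Pi - e^{-iHt}\|\le d\,\varepsilon_{\mathrm{BE}}+\varepsilon_{\mathrm{poly}}.
\end{equation*}
Note that we never need $\tilde U_A$ to be an exact block-encoding of any matrix: the argument is purely operator-norm based.

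The main obstacle is bookkeeping in how the $\cos$ and $\sin$ parts are combined. If $e^{-ixt}$ is built from separate even and odd QSVT sequences joined by an outer LCU with coefficients $1$ and $-i$ (normalized by $1/2$), the constant could double, and amplitude amplification could worsen it further. I would handle this by stating the bound per QSVT sequence of degree $d$, absorbing combination constants into the $\mathcal{O}$-level interpretation. Alternatively, I would apply the hybrid argument to the full combined circuit and count $d$ as its total number of block-encoding queries. For the stochastic setting described above, the same telescoping applies to the expected error under independent sampling, with each slot contributing at most the per-use error.
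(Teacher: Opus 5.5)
Your proof is correct. It rests on the same mechanism as the paper's: each of the $d$ queries contributes at most $\varepsilon_{\mathrm{BE}}$ because it is sandwiched between norm-one operators. The difference lies in how you carry it out.

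The paper writes $\tilde U_A = U_A+\Delta U$ and expands the QSVT sequence to first order in $\Delta U$. Taken literally, this controls only the linear term. A naive expansion to all orders would instead give $(1+\varepsilon_{\mathrm{BE}})^d-1$, which exceeds $d\,\varepsilon_{\mathrm{BE}}$. Your hybrid telescoping uses the unitarity of both $U_A$ and $\tilde U_A$ in every intermediate circuit $C_k$. It therefore gives $d\,\varepsilon_{\mathrm{BE}}$ exactly, with no neglected higher-order terms. That is what the paper's remark that ``all intermediate operators are unitary'' is gesturing at. The one hypothesis you use implicitly when you call $X,Y$ unitary is that $\tilde U_A$ is unitary, or merely that $\|\tilde U_A\|\le 1$. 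This holds here, since $\tilde U_A$ is built from unitary oracles.

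You also make explicit two steps the paper skips: restricting to the block via norm-one projectors, and the triangle inequality that brings in $\varepsilon_{\mathrm{poly}}$. The appendix version of the theorem in fact states only $d\,\varepsilon_{\mathrm{BE}}$.

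Your caveat about assembling $e^{-ixt}$ from separate even and odd sequences is a real point that the paper's statement glosses over. If you interpret $d$ as the total number of queries to $\tilde U_A$, possibly controlled, the hybrid bound remains valid, because controlled versions have the same distance. The $1/2$ normalization of the outer LCU, or any amplitude amplification, then changes only constant factors.
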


Once a block-encoding is available, QSVT applies a sequence of alternating reflections and phase rotations. Each use of the block-encoding introduces an error, which accumulates across the $d$ steps of the polynomial transformation. Consequently, the overall QSVT error grows at most linearly with the polynomial degree.

For stochastic block-encodings, the dominant contribution to the block-encoding error is controlled by the dispersion of the coefficients entering the estimator. Specifically:

\begin{theorem}[Estimator variance proxy]
\label{theo:MainEstimatorVariance}
Let
\begin{equation}
\label{eq:stochastic_estimator}
\hat H = \sum_j \frac{c_j}{p_j} H_j \,\xi_j,
\end{equation}
be a stochastic estimator of the Hamiltonian constructed using SparSto, where $\xi_j\sim\mathrm{Bernoulli}(p_j)$ independently. Then the quantity
\begin{align}
\label{eq:variance_result}
\mathrm{Var}_{\mathrm{coeff}}[\hat H]
:=
\|\vec u\|_1,
\qquad
\vec u \text{ as defined in \eqref{eq:u}},
\end{align}
captures the coefficient dispersion of the estimator and serves as a proxy controlling the induced block-encoding error.
\end{theorem}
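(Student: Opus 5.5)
The statement is partly a definition, so the content to establish is that $\|\vec u\|_1$ is the natural second-moment quantity of $\hat H$, and that it controls the block-encoding error entering Theorem~\ref{theo:MainQSVTPropagation}. The plan is to compute the operator-valued covariance of $\hat H$ directly, bound its norm by $\|\vec u\|_1$, and then connect this to the \textsc{Prepare} error so that Theorem~\ref{theo:MainLCUPropagation} applies.

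\textbf{Step 1: second moment of $\hat H$.} Write $\hat H - H = \sum_j c_j H_j(\xi_j/p_j - 1)$. The $\xi_j$ are independent and $\mathbb{E}[\xi_j/p_j - 1]=0$, so the cross terms vanish in $\mathbb{E}[(\hat H-H)^2]$. What remains is
\begin{equation}
\mathbb{E}\bigl[(\hat H - H)^2\bigr] = \sum_j c_j^2\,\mathrm{Var}\!\left[\tfrac{\xi_j}{p_j}\right] H_j^2 = \sum_j \Bigl(\tfrac{1}{p_j}-1\Bigr)c_j^2\, H_j^2 ,
\end{equation}
using $\mathrm{Var}[\xi_j/p_j]=(1-p_j)/p_j$. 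Since $\|H_j\|\le 1$, the triangle inequality gives $\|\mathbb{E}[(\hat H-H)^2]\|\le \sum_j u_j=\|\vec u\|_1$. For Pauli strings, $H_j^2=I$, so equality holds. This identifies $\|\vec u\|_1$ exactly as the coefficient variance, and it is the same quantity that appears as the leading term of Eq.~\eqref{eq:SparStoError}.

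\textbf{Step 2: translate to a block-encoding error.} Under the stochastic LCU, the \textsc{Prepare} oracle $\tilde V$ loads amplitudes $\sqrt{|c_j|\xi_j/p_j}/\sqrt{\hat\lambda}$, with random normalization $\hat\lambda=\sum_j |c_j|\xi_j/p_j$. I would bound $\mathbb{E}\|\hat H/\hat\lambda - H/\lambda\|$ rather than the unitary distance itself.

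\begin{itemize}
\item Write $\hat H/\hat\lambda - H/\lambda = (\hat H - H)/\lambda + \hat H(1/\hat\lambda - 1/\lambda)$.
\item For the first term, Jensen's inequality gives $\mathbb{E}\|\hat H - H\|\le \sqrt{\mathbb{E}\|\hat H-H\|^2}$. Because $(\hat H-H)^2$ is positive semidefinite, $\|\hat H-H\|^2=\|(\hat H-H)^2\|$.
\item For the second term, $\hat\lambda-\lambda$ is a scalar with variance at most $\sum_j(1/p_j-1)c_j^2=\|\vec u\|_1$.
\end{itemize}

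To first order, both contributions are therefore $O(\sqrt{\|\vec u\|_1}/\lambda)$. This is then fed into Theorem~\ref{theo:MainLCUPropagation} and Theorem~\ref{theo:MainQSVTPropagation}.

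\textbf{Main obstacle.} The difficulty is that $\mathbb{E}\|X\|^2$ is not bounded by $\|\mathbb{E}X^2\|$ in general. A rigorous operator-norm bound needs a matrix concentration inequality, such as matrix Bernstein applied to the independent summands $c_jH_j(\xi_j/p_j-1)$. That yields $\sqrt{\|\vec u\|_1\log N}$ plus a term of order $\max_j|c_j|/p_j$.

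This is why I would present the result as a proxy rather than a sharp bound. The statement being proved is that the variance parameter in such inequalities equals $\|\vec u\|_1$, and that the error scales monotonically with it. The logarithmic and maximal-term corrections would be stated explicitly as caveats.
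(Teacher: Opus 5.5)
Your Step~1 is the paper's proof. The appendix expands $\mathbb{E}[\hat H^2]$ using independence, subtracts $H^2$, and reads off $\sum_i c_i^2(1/p_i-1)=\|\vec u\|_1$. Your centered computation of $\mathbb{E}[(\hat H-H)^2]$ is the same identity, done slightly more carefully: you keep the factor $H_j^2$, which the paper drops without comment (valid only because $H_j^2=I$ for Pauli strings). That identity is all the paper proves. The link $\varepsilon_{\mathrm{BE}}=\mathcal{O}(\sqrt{\mathrm{Var}_{\mathrm{coeff}}[\hat H]})$ is only asserted in the main text, ``under this approximation''.

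Your Step~2, which tries to supply that link, has a step that fails. You bound the difference of the \emph{encoded blocks}, $\|\hat H/\hat\lambda-H/\lambda\|$, and then plan to feed it into Theorems~\ref{theo:MainLCUPropagation} and~\ref{theo:MainQSVTPropagation}. Those theorems take \emph{unitary} distances as inputs, namely $\|V-\tilde V\|$ and $\|U_A-\tilde U_A\|$. The top-left block difference only lower-bounds $\|U_A-\tilde U_A\|$, and here the unitary distances are genuinely not small.

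To see this, write $\lambda_A=\sum_{j\in A}|c_j|$ and $\lambda_{\bar A}=\lambda-\lambda_A$. The two \textsc{Prepare} states have overlap $\langle\psi|\tilde\psi\rangle=\sum_j|c_j|\sqrt{\xi_j/p_j}/\sqrt{\lambda\hat\lambda}$. This is $\approx\lambda_A/\lambda$ whenever the $p_j$ on $\bar A$ are small, so $\varepsilon_{\mathrm{prep}}\ge\|V\ket{0}-\tilde V\ket{0}\|\approx\sqrt{2\lambda_{\bar A}/\lambda}$. Meanwhile, for $p_j=\alpha|c_j|$ one has $\|\vec u\|_1\le\lambda_{\bar A}^2/\mu_{\bar A}$ with $\mu_{\bar A}=\sum_{j\in\bar A}p_j$, which can be small.

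This is not mere looseness of Theorem~\ref{theo:MainLCUPropagation}. For distinct Pauli strings, average $\|(W-\tilde W)\ket{0}\ket{\phi}\|^2$ over Haar-random $\ket{\phi}$; only diagonal terms survive, since $\mathrm{tr}(H_jH_k)=0$ for $j\neq k$. This gives $\|W-\tilde W\|\ge\|V\ket{0}-\tilde V\ket{0}\|$ for any unitaries preparing these states. Theorem~\ref{theo:MainQSVTPropagation} therefore returns only a vacuous $\mathcal{O}(d)$ bound.

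To make Step~2 rigorous, you would need a robustness statement phrased for the encoded matrices themselves. One example is the QSVT robustness lemma of Gily\'en et al., $\|P(A)-P(\tilde A)\|=\mathcal{O}(d\sqrt{\|A-\tilde A\|})$. Even that assumes a single fixed $\tilde A$, not fresh samples at each of the $d$ queries. So the main obstacle is this block-versus-unitary mismatch, not only the matrix-concentration issue you flag. If you present Step~2 purely as heuristic motivation, as the paper does, your proposal covers everything the paper actually proves.

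A minor point: in matrix Bernstein the logarithm is of the Hilbert-space dimension $2^N$. With $N$ qubits, as in the paper's notation, the factor is therefore $\sqrt{N\|\vec u\|_1}$ rather than $\sqrt{\|\vec u\|_1\log N}$.
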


We emphasize that $\mathrm{Var}_{\mathrm{coeff}}[\hat H]$ is a scalar quantity derived from the coefficients, and should not be interpreted as a full operator variance. Rather, it quantifies the magnitude of stochastic fluctuations entering the LCU construction.

Under this approximation, the induced block-encoding error scales as
\begin{equation}
\label{eq:be_stochastic_scaling}
\varepsilon_{\mathrm{BE}}
=
\mathcal{O}\!\left(
\sqrt{\mathrm{Var}_{\mathrm{coeff}}[\hat H]}
\right).
\end{equation}

Combining the previous bounds yields the overall error scaling of Sparse QSVT,
\begin{equation}
\label{eq:sparse_qsvt_total_error}
\varepsilon_{\mathrm{SparseQSVT}}
=
\mathcal{O}\!\left(
d\,\sqrt{\mathrm{Var}_{\mathrm{coeff}}[\hat H]}
\right)
+
\varepsilon_{\mathrm{poly}}.
\end{equation}

Unlike the original SparSto setting, where stochastic errors can average out across repeated queries, in Sparse QSVT the errors propagate coherently through the polynomial transformation. As a result, they accumulate linearly with the degree $d$. Sparse QSVT therefore trades a reduced per-query cost $\Theta(\mu)$ for a controlled linear amplification of the stochastic block-encoding error. Importantly, this framework is not restricted to stochastic sparsification. It also applies to approximate or variational implementations of the \textsc{Prepare} and \textsc{Select} oracles, where the exact LCU primitives are replaced by parameterized or learned circuits. In such settings, the resulting approximation errors enter the block-encoding in the same manner as stochastic errors, and are subsequently amplified through the QSVT sequence. This provides a systematic way to assess the impact of imperfect oracle implementations, and highlights the sensitivity of QSVT-based methods to such approximations. Detailed proofs are provided in Appendix~\ref{app:proof}.

\section{Models}
\label{sec:Model}

Our objective is to understand for which classes of Hamiltonians randomized simulation methods provide a tangible advantage, and to identify the regime in which such advantages persist in practice. We focus on two key structural properties: the number of terms in the Hamiltonian and the dispersion of the coefficient distribution. We effectively construct a parametric family of random Hamiltonians that allows us to systematically vary these properties while keeping the Hilbert-space dimension fixed.

We define a family of Hamiltonians parameterized by the number of terms $L$, the Pauli weight $k$, and a coefficient distribution $\mathcal{P}(c_l)$:
\begin{equation}
\label{eq:RandomHamiltonianExtraction}
\begin{split}
H(L,k)
&=
\sum_{l=1}^{L} c_l \, P^{(k)}_l,
\qquad
c_l \sim \mathcal{P}(c_l), \\
P^{(k)}_l
&\in \{I,X,Y,Z\}^{\otimes N}
\quad \text{with weight } k, \\
&\text{s.t. } \sum_{l} |c_l| = 1.
\end{split}
\end{equation}

Each Pauli string $P^{(k)}_l$ acts non-trivially on up to $k$ qubits. The coefficients are sampled independently from $\mathcal{P}(c_l)$ and rescaled to satisfy $\|H\|_1=\sum_l |c_l|=1$, thereby fixing the overall energy scale. The maximum number of distinct terms is $\sum_{j=1}^{k}\binom{N}{j}3^j$.

The purpose of this ensemble is not to faithfully reproduce physical Hamiltonians, but rather to isolate statistical features that are favorable to randomized simulation methods. In particular, by sampling coefficients independently and neglecting structured commutation patterns, we suppress features—such as locality-induced commutativity—that are known to benefit deterministic approaches. Consequently, any observed advantage of randomized methods should be interpreted as an upper bound on their practical usefulness.

Although physical Hamiltonians possess additional structure (e.g., locality, symmetries, and nontrivial commutation relations), this construction allows us to probe the effect of coefficient dispersion and term count in isolation.

We consider two classes of coefficient distributions: a lognormal distribution and a Pareto type-II (Lomax) distribution. In each numerical experiment, coefficients are sampled from one of these distributions.

\begin{equation}
\mathrm{LogNormal}(x;\mu,\sigma)
=
\frac{1}{x\sigma\sqrt{2\pi}}
\exp\!\left(
-\frac{(\ln x-\mu)^2}{2\sigma^2}
\right),
\end{equation}

\begin{equation}
\mathrm{Pareto\text{-}II}(x;a)
=
a(1+x)^{-(a+1)}.
\end{equation}

The lognormal distribution produces coefficients with broadly dispersed magnitudes, controlled by the variance parameter $\sigma^2$, while the Pareto-II distribution generates heavy-tailed behavior governed by the shape parameter $a$.

In Appendix~\ref{app:StatisticalHam}, we show that molecular Hamiltonians exhibit coefficient statistics that are partially consistent with these distributions. However, real systems also exhibit additional structure—such as correlated terms and nontrivial commutation patterns—not captured by the present model.

Sampling from these distributions allows us to control the degree of coefficient inhomogeneity. Randomized simulation methods are expected to benefit when the distribution is highly uneven, as a small subset of dominant terms can be treated deterministically while smaller contributions are sampled. By varying $\sigma^2$ or $a$, the ensemble in Eq.~\eqref{eq:RandomHamiltonianExtraction} enables a systematic exploration of when such structural imbalance leads to practical advantages for randomized methods.
\section{Results} 

\subsection{Error propagation}
\label{sec:results1}
\begin{figure}[t]
    \centering
    \includegraphics[width=0.9\columnwidth]{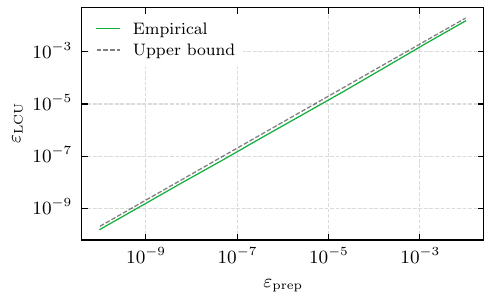}
    \caption{\justifying{\textbf{LCU error under \textsc{Prepare} error.} Block-encoding error as a function of the \textsc{Prepare} oracle error $\varepsilon_{\mathrm{prep}}=\|V-\tilde{V}\|$. Theoretical upper bound and numerical results are shown for comparison. }}
    \label{fig:LCUpropagation}
\end{figure}

\begin{figure}[t]
    \centering
    \includegraphics[]{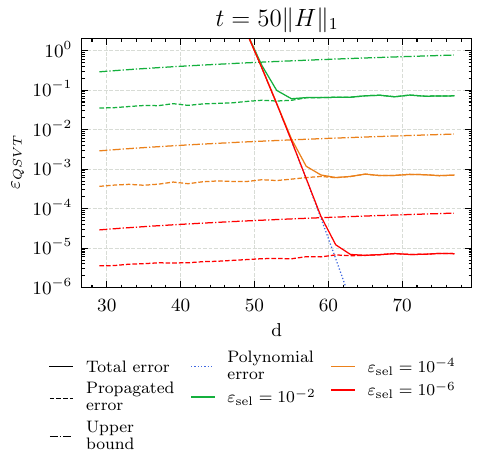}
    \caption{\justifying{\textbf{QSVT error under \textsc{Select} approximation.} Total \ac{qsvt} simulation error under \textsc{Select} approximation with $\varepsilon_{\mathrm{sel}}=\|S-\tilde{S}\|$. The error is decomposed into the analytically computed polynomial truncation contribution~\cite{unification} and the numerically evaluated block-encoding contribution. Results are shown for a random Hamiltonian with Pareto-distributed coefficients ($a=0.9$) at several values of $\varepsilon_{\mathrm{sel}}$.}}
    \label{fig:SelectPropagation}
\end{figure} 

We begin by empirically assessing the tightness of the theoretical error bounds derived in Section \ref{sec:sparse_qsvt}.

\begin{figure*}[tbh]
    \centering
    \includegraphics[]{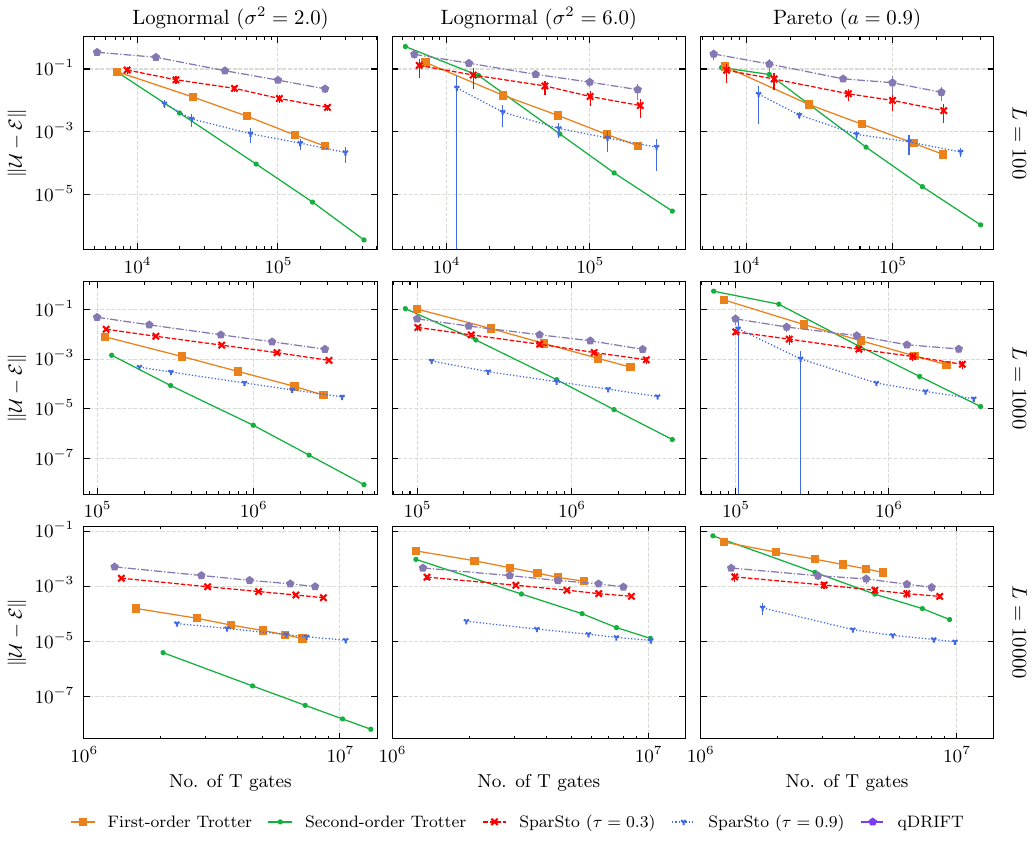}
  \caption{\justifying{\textbf{Trotter vs SparSto across Hamiltonian structure.} 
Spectral error \eqref{eq:spectral_error} as a function number of T gates for first- (orange) and second-(green) order Trotterization, qDRIFT (purple), and SparSto at sparsification thresholds $0.3$ (red) and $0.9$ (blue).
Rows correspond to increasing numbers of terms $L$, and columns to increasing coefficient variance.}}
\label{fig:TrotterSparstroStructure}
\end{figure*}

In \figref{fig:LCUpropagation}, we study the impact of approximating the \textsc{Prepare} oracle on the accuracy of the \ac{lcu} block-encoding of a random $4$-qubit Hamiltonian with coefficients sampled from a Pareto distribution with $a=0.9$. Starting from the exact \textsc{Prepare} oracle $V$, we construct a perturbed operator $\tilde{V}$ satisfying $\varepsilon_{\mathrm{prep}}=\|V-\tilde{V}\|$. Using $\tilde{V}$ together with the exact \textsc{Select} oracle $S$, we explicitly construct the dense matrix representation of the corresponding \ac{lcu} circuit $\tilde{W}$. The block-encoding error $\varepsilon_{\mathrm{LCU}}$ is quantified as the spectral distance between $\tilde{W}$ and the exact block-encoding $W$. Repeating this procedure for increasing values of $\varepsilon_{\mathrm{prep}}$ yields the trend shown in \figref{fig:LCUpropagation}. The observed scaling closely follows the linear bound in~\eqref{eq:lcu_error_bound}.

In \figref{fig:SelectPropagation}, we analyze the error introduced by approximating the \textsc{Select} oracle within a full \ac{qsvt} simulation of the same $4$-qubit Hamiltonian, for evolution time $t=50\tau$. We perturb the exact \textsc{Select} oracle to obtain $\tilde{S}$ such that $\varepsilon_{\mathrm{sel}}=\|S-\tilde{S}\|$, and use it to construct the corresponding \ac{lcu} block-encoding and the associated \ac{qsvt} circuit for different truncation degrees $d$ and values of $\varepsilon_{\mathrm{sel}}$. The total error is separated into two contributions: the polynomial truncation error, evaluated analytically from~\eqref{eq:QSVT_poly_error}, and the error due to the imperfect block-encoding, obtained numerically from the dense matrix representation. The latter exhibits the $\mathcal{O}(d)$ scaling predicted by~\eqref{eq:qsvt_error_bound}, while remaining approximately one order of magnitude below the theoretical bound in the explored parameter regime.

\begin{figure*}[t]
    \centering
    \includegraphics[]{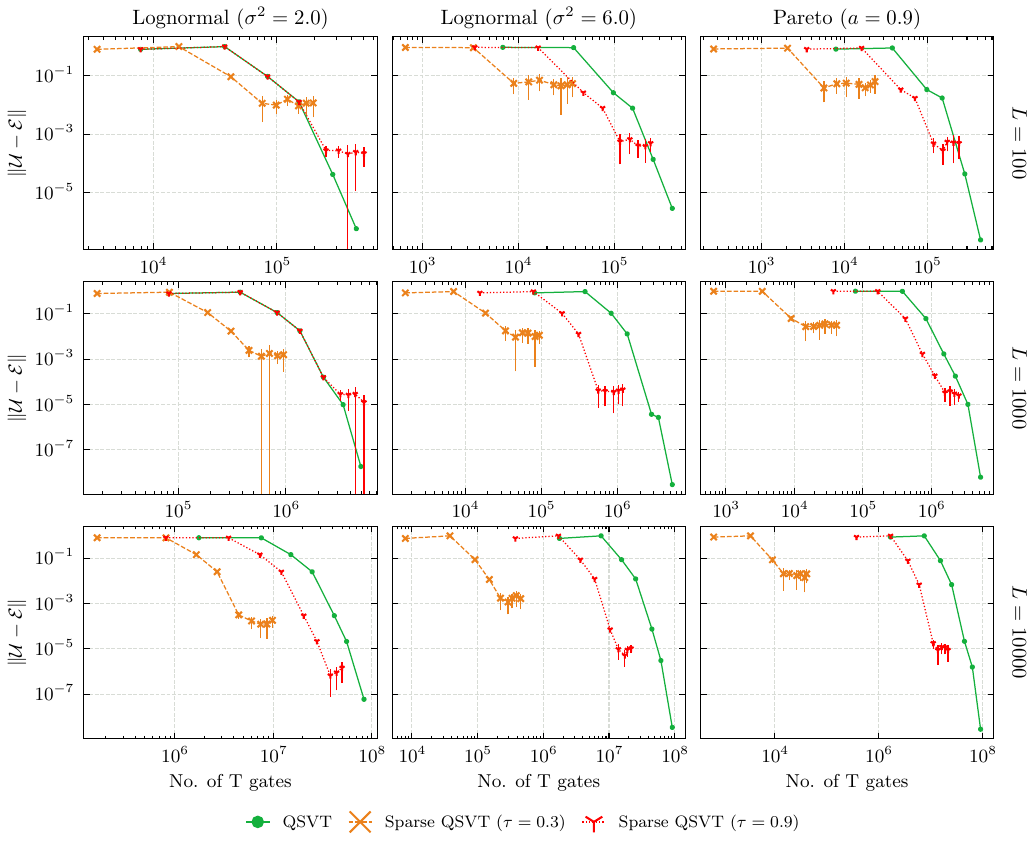}
    \caption{\justifying{\textbf{QSVT vs Sparse QSVT across Hamiltonian structure.} Spectral error \eqref{eq:spectral_error} as a function of number of T gates for \ac{qsvt} (green) and Sparse \ac{qsvt} at sparsification thresholds $0.3$ (orange) and $0.9$ (red). Rows correspond to increasing numbers of terms $L$, and columns to increasing coefficient unevenness. The first two columns use lognormal coefficient distributions, while the last uses a Pareto distribution.}}
    \label{fig:qsvt_hamiltonian_structure}
\end{figure*}

\subsection{Randomized Simulation}
\label{sec:results2}

We now investigate the central question of this work: which Hamiltonians benefit from randomized simulation methods, and in which regimes such benefits are practically relevant. We employ the random Hamiltonians introduced in Sec.~\ref{sec:Model}, which are constructed to emphasize coefficient dispersion and large term counts. As discussed previously, this ensemble suppresses structured commutation patterns and should therefore be interpreted as favorable to randomized methods; any observed advantage should be viewed as an upper bound on their practical usefulness.

We fix the number of qubits to $N=8$ and the Pauli weight to $k=6$. To quantify the simulation error, we use the spectral norm of the difference between the approximate and exact evolution operators. In practice, we compute this as the maximum absolute eigenvalue of the operator difference,
\begin{equation}
\|r\| := \max \{ |\lambda| : \lambda \in \mathrm{Spectrum}(r) \},
\label{eq:spectral_error}
\end{equation}
where $r = \mathcal{U} - \mathcal{E}$. While this is not the usual choice, this metric provides a consistent and efficiently computable proxy for the operator norm in the regimes considered.

Figures~\ref{fig:TrotterSparstroStructure} and~\ref{fig:qsvt_hamiltonian_structure} show results for $8$-qubit Hamiltonians with $L=10^2$ to $10^4$ terms. Coefficients are sampled either from lognormal distributions with $\sigma^2=2.0$ and $\sigma^2=6.0$, or from a Pareto distribution with shape parameter $a=0.9$. For each instance, we perform fixed-time evolution at $t=10\lambda$ and report the spectral error as a function of T gate counts for both deterministic and randomized algorithms. The T gate counts are obtained using the gridsynth algorithm~\cite{RossSelingerAlgorithm}, as described in Appendix~\ref{app:gate_counts}. Error bars represent one standard deviation over 10 Hamiltonian instances.

In Fig.~\ref{fig:TrotterSparstroStructure}, we compare first- and second-order Trotterization with qDRIFT and SparSto for sparsification thresholds $\tau \in \{0.3, 0.9\}$. The parameter $\tau$ specifies the fraction of the total $\ell_1$ norm assigned to the deterministic subset,
\begin{equation} 
\sum_{j \in A} |c_j| \lesssim \tau \lambda.
\end{equation}

For sufficiently low target error, deterministic Trotterization eventually outperforms randomized methods. Increasing either the number of terms or the coefficient heterogeneity shifts the crossover point to lower error, enlarging the regime in which randomization provides an advantage. Heavy-tailed (Pareto) distributions favor randomized schemes more strongly. For the largest system considered, SparSto with threshold $0.9$ remains more efficient than second-order Trotterization up to $\varepsilon \lesssim 10^{-5}$. Among randomized approaches, SparSto consistently outperforms qDRIFT due to smaller constant overheads, while decreasing the threshold brings its performance closer to qDRIFT.

\begin{figure}
\centering 
\includegraphics[width=0.96\linewidth]{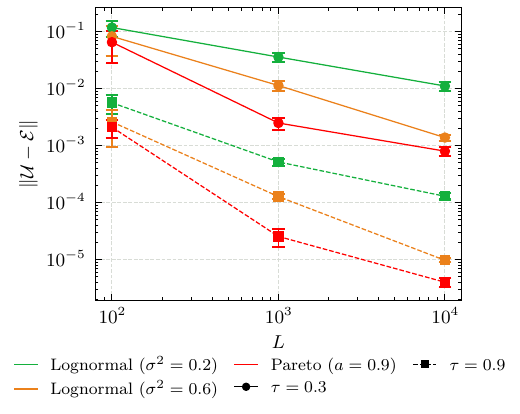} \caption{\justifying{\textbf{Intersection.} This figure shows the point where the error of Sparstro is equal to the second-order Trotter as a function of the number of terms for different variances. This identifies the point at which the deterministic method begins to outperform the stochastic approach.}}
\label{fig:intersection}
\end{figure}

Figure~\ref{fig:qsvt_hamiltonian_structure} presents the analogous comparison between 
\ac{qsvt} and Sparse \ac{qsvt}. Sparse \ac{qsvt} reduces constant overhead in low-accuracy 
regimes without modifying the asymptotic scaling. The stochastic block-encoding error 
accumulates with polynomial degree $d$, introducing an accuracy floor once the 
Jacobi--Anger truncation error becomes negligible. This floor is governed primarily by the 
coefficient distribution and sparsification threshold, rather than directly by $L$. 
Increasing $L$ enables larger gate-count reductions, while increasing coefficient variance 
lowers the attainable accuracy but amplifies the relative reduction in gate cost.

For threshold $0.3$, the minimum achievable error lies between $10^{-1}$ and $10^{-3}$ 
depending on the distribution, with up to three orders of magnitude reduction in $T$-count 
for the largest Pareto instance at errors around $10^{-2}$. Raising the threshold to $0.9$ 
lowers the error floor to $10^{-6}$, but reduces the $T$-count improvement to roughly one 
order of magnitude in the largest Pareto case. When the threshold is $0.9$ and both $L$ and 
the coefficient variance are small (e.g., $L \leq 1000$), no significant gate reduction is 
observed. This behavior stems from the discrete nature of the resource model: gate counts 
scale with the number of ancilla qubits $\lceil\log_2 L\rceil$, so improvements occur only 
when sparsification reduces $L$ sufficiently to cross a power-of-two boundary.

\begin{figure*}[t] \centering 
\includegraphics[width=\linewidth]{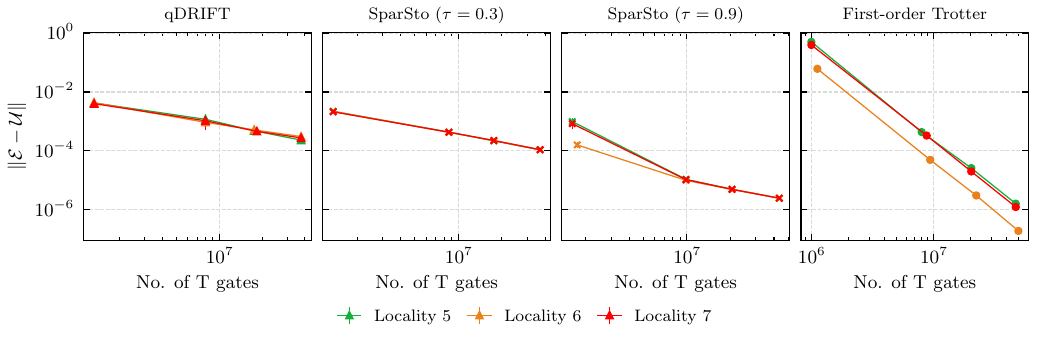} 
\caption{\textbf{Effect of locality.} Simulation error of qDRIFT, Trotter, and SparSto as 
a function of $T$-gate count for varying term locality $k$.} 
\label{fig:LocalityScalingTrotter}
\end{figure*}

Overall, randomization can yield order-of-magnitude reductions in gate count, but only in 
moderate-precision regimes, typically around $\varepsilon \sim 10^{-2}$. At higher 
precision, accumulated stochastic errors offset these gains, rendering randomized approaches 
less competitive.

For product-formula methods (Fig.~\ref{fig:TrotterSparstroStructure}), a clear crossover 
emerges between randomized and deterministic strategies: randomized methods are more 
efficient at low accuracy, while deterministic methods dominate at high accuracy. 
Figure~\ref{fig:intersection} summarizes this behavior by showing the crossover point 
between SparSto and second-order Trotterization as a function of $L$ and coefficient 
variance. No analogous crossover is observed for \ac{qsvt}.

Finally, we examine the effect of the locality parameter $k$ on product-formula simulations, 
considering Hamiltonians drawn from Eq.~\eqref{eq:RandomHamiltonianExtraction} with fixed 
locality $k$ --- that is, with exactly $k$ non-trivial Pauli operators per term.

Figure~\ref{fig:LocalityScalingTrotter} shows that randomized methods are largely 
insensitive to locality within this model. Deterministic methods, by contrast, exhibit a 
mild dependence on the parity of $k$. A plausible explanation lies in the commutator 
structure: Pauli strings commute when the number of anticommuting sites is even, an event 
that occurs with slightly higher probability for even locality. This effect is specific to 
the unstructured ensemble considered here and may not persist in more structured physical 
systems.

\section{Conclusions}

Selecting an appropriate quantum simulation primitive \emph{a priori} remains a nontrivial 
task. Asymptotic complexity alone is often insufficient to predict practical performance, 
as constant factors, coefficient dispersion, and implementation overhead can dominate in 
realistic regimes. In this work, we characterize the conditions under which randomized 
simulation methods offer a practical advantage over deterministic alternatives.

Our analysis shows that randomization is most effective when the number of terms $L$ is 
large and the coefficient distribution is highly inhomogeneous --- for instance, exhibiting 
heavy-tailed behavior. In such regimes, randomized strategies can yield order-of-magnitude 
reductions in gate count. However, this advantage is confined to moderate precision, 
typically around $\varepsilon \sim 10^{-3}$; at higher precision, deterministic methods 
consistently outperform their randomized counterparts.

The Hamiltonian ensembles considered here are designed to emphasize coefficient dispersion 
while suppressing additional structure, such as locality-induced commutation patterns. Since 
such structure is known to reduce Trotter error, the advantages of randomized methods 
reported in this work should be interpreted as an upper bound on their practical utility. 
In more structured physical Hamiltonians, deterministic approaches are expected to perform 
at least as well, and often better.

The limitations of randomized methods admit natural explanations in both algorithmic 
settings considered. For product-formula methods, achieving higher-order convergence without 
significant overhead remains difficult~\cite{qSWIFT}. For \ac{qsvt}-based algorithms, 
stochastic and approximation errors introduced at the block-encoding level are amplified 
through the polynomial transformation, producing an accuracy floor that limits the benefit 
of sparsification. More broadly, this highlights the sensitivity of \ac{qsvt}-based methods 
to imperfect oracle implementations, whether stochastic or variational in nature.

Several directions remain open for future work. A natural extension is to characterize how 
additional structural features --- commutation relations, symmetries, or inter-term 
correlations --- affect the performance of randomized methods. Extending this analysis to 
larger system sizes and more realistic Hamiltonians would further clarify the extent to 
which the trends observed here persist in practical settings.
\section*{Acknowledgment} 
FP and MG are supported by CERN through the CERN Quantum Technology Initiative. AG is supported by PNRR MUR projects under Grants PE0000023-NQSTI and CN00000013-ICSC, and by QUART\&T, a project funded by the Italian Institute of Nuclear Physics (INFN) within the Technological and Interdisciplinary Research Commission (CSN5) and the Theoretical Physics Commission (CSN4).

\bibliography{bibliography}

\appendix

\section{Error propagation in LCU}
\label{app:proof}

We analyze how imperfections in the \textsc{Prepare} and \textsc{Select} oracles propagate through the \ac{lcu} block-encoding and the subsequent \ac{qsvt} transformation. In particular, we show that errors in the LCU primitives lead to controlled (at most linear) deviations in the block-encoding, which are then amplified linearly in the QSVT polynomial degree.

\begin{theorem}[LCU error propagation]
\label{theo:LCUPropagationUnitary}
Given an approximate \textsc{Prepare} oracle $\tilde{V}$ such that
\begin{equation*}
\varepsilon_{\mathrm{prep}} = \|V - \tilde{V}\|,
\end{equation*}
the induced error on the LCU block-encoding
\begin{equation*}
\tilde{W} = (\tilde{V}^\dagger \otimes I)\, S\, (\tilde{V} \otimes I)
\end{equation*}
is bounded by
\begin{equation*}
\varepsilon_{\mathrm{LCU}} = \|W - \tilde{W}\| \le 2\,\varepsilon_{\mathrm{prep}},
\end{equation*}
where $W$ is the exact LCU block-encoding.
\end{theorem}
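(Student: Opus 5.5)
The plan is a telescoping (hybrid) argument combined with unitarity of the operators involved and submultiplicativity of the spectral norm. We will write $A = V\otimes I$, $\tilde A = \tilde V\otimes I$, so that $W = A^\dagger S A$ and $\tilde W = \tilde A^\dagger S \tilde A$. Since the spectral norm is invariant under tensoring with the identity, $\|A-\tilde A\| = \|V-\tilde V\| = \varepsilon_{\mathrm{prep}}$, and likewise $\|A^\dagger - \tilde A^\dagger\| = \varepsilon_{\mathrm{prep}}$, because the norm is invariant under taking adjoints.

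The key step is to insert and subtract a hybrid term:
\begin{equation*}
W-\tilde W = A^\dagger S A - \tilde A^\dagger S \tilde A = (A^\dagger-\tilde A^\dagger)\, S A + \tilde A^\dagger S\,(A-\tilde A).
\end{equation*}
The triangle inequality and submultiplicativity then give
\begin{equation*}
\|W-\tilde W\| \le \|A^\dagger-\tilde A^\dagger\|\,\|S\|\,\|A\| + \|\tilde A^\dagger\|\,\|S\|\,\|A-\tilde A\|.
\end{equation*}

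To reach the claimed constant $2$ we need $\|S\|=1$, $\|A\|=1$ and $\|\tilde A\|=1$. $S = \sum_j \ket{j}\bra{j}\otimes H_j$ is block diagonal, so $\|S\| = \max_j \|H_j\| \le 1$; when the $H_j$ are Pauli strings, as in the paper's setting, $S$ is unitary. $V$ is unitary by construction. The one point needing care is $\tilde V$. In the intended setting it is itself a unitary circuit (a perturbed \textsc{Prepare}), so $\|\tilde A^\dagger\|=1$ and the bound is exactly $\varepsilon_{\mathrm{prep}} + \varepsilon_{\mathrm{prep}} = 2\varepsilon_{\mathrm{prep}}$.

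This unitarity assumption on $\tilde V$ is the main potential obstacle, and I would state it explicitly in the theorem. If $\tilde V$ were an arbitrary operator, we would only have $\|\tilde V\|\le 1+\varepsilon_{\mathrm{prep}}$, which yields the slightly weaker bound $(2+\varepsilon_{\mathrm{prep}})\,\varepsilon_{\mathrm{prep}}$. That bound still equals $2\varepsilon_{\mathrm{prep}}$ to leading order. Finally, restricting to the top-left block, i.e. projecting onto $\ket{0}^{\otimes a}$, can only decrease the norm, so the same bound holds for the encoded matrix $H/\lambda$ itself.

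The analogous statement for \textsc{Select} is even simpler. With $V$ exact, $W-\tilde W = A^\dagger (S-\tilde S) A$, so unitary invariance gives $\|W-\tilde W\| = \|S-\tilde S\|$ with no factor of $2$. If both oracles are perturbed, a three-term telescoping of the same kind gives $2\varepsilon_{\mathrm{prep}} + \varepsilon_{\mathrm{sel}}$.
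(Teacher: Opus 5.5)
Your proof is correct and is essentially the paper's argument: both insert the same hybrid term $\tilde V^\dagger S V$, then apply the triangle inequality, submultiplicativity, and unitarity to bound each of the two pieces by $\varepsilon_{\mathrm{prep}}$. You also note explicitly that bounding $\|\tilde V^\dagger S\|$ by $1$ requires $\tilde V$ itself to be unitary (or a contraction); this is a worthwhile clarification, since the paper cites only the unitarity of $V$ and $S$ but implicitly uses that of $\tilde V$.
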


\begin{proof}
For notational simplicity, we omit identity operators:
\begin{align*}
\varepsilon_{\mathrm{LCU}} 
&= \|\tilde{V}^\dagger S \tilde{V} - V^\dagger S V\| \\
&= \|\tilde{V}^\dagger S (\tilde{V} - V) + (\tilde{V}^\dagger - V^\dagger) S V\| \\
&\le \|\tilde{V}^\dagger S\| \|\tilde{V} - V\| + \|S V\| \|\tilde{V}^\dagger - V^\dagger\| \\
&\le 2\,\varepsilon_{\mathrm{prep}},
\end{align*}
where we used the triangle inequality and the fact that $V$ and $S$ are unitary.
\end{proof}

An analogous argument shows that an error in the \textsc{Select} oracle $\varepsilon_{\mathrm{sel}} = \|S - \tilde{S}\|$ propagates linearly,
\begin{equation*}
\varepsilon_{\mathrm{LCU}} \le \varepsilon_{\mathrm{sel}}.
\end{equation*}

\begin{theorem}[QSVT error amplification]
\label{theo:QSVTPropagation}
Let $U_A$ be an exact block-encoding and $\tilde{U}_A$ an approximation satisfying
\begin{equation*}
\varepsilon_{\mathrm{BE}} = \|U_A - \tilde{U}_A\|.
\end{equation*}
Then a degree-$d$ \ac{qsvt} sequence satisfies
\begin{equation*}
\varepsilon_{\mathrm{QSVT}} \le d\,\varepsilon_{\mathrm{BE}}.
\end{equation*}
\end{theorem}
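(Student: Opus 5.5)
The plan is to prove the bound with a standard telescoping (hybrid) argument, using only unitarity and the triangle inequality, in the same spirit as the proof of Theorem~\ref{theo:LCUPropagationUnitary}.

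\emph{Step 1: write out the QSVT sequence.} A degree-$d$ QSVT circuit with phases $\vec\phi$ is a product
\[
U_{\vec\phi} = \Pi_{\phi_1}\, U_A^{(1)}\, \tilde\Pi_{\phi_2}\, U_A^{(2)} \cdots,
\]
containing exactly $d$ queries to the block-encoding (alternating between $U_A$ and $U_A^\dagger$ according to parity), interleaved with the exact projector-controlled phases of Eq.~\eqref{eq:ProjectorControlledPhases}. The approximate circuit $\tilde U_{\vec\phi}$ is the same product with each $U_A$ (resp.\ $U_A^\dagger$) replaced by $\tilde U_A$ (resp.\ $\tilde U_A^\dagger$). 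The phases are assumed exact, and we note that $\|U_A^\dagger-\tilde U_A^\dagger\| = \|U_A-\tilde U_A\| = \varepsilon_{\mathrm{BE}}$, since the adjoint preserves the spectral norm.

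\emph{Step 2: telescope.} Define hybrid circuits $H_k$, $k=0,\dots,d$, in which the first $k$ queries are approximate and the remaining $d-k$ are exact. Then $H_0 = U_{\vec\phi}$, $H_d = \tilde U_{\vec\phi}$, and
\[
U_{\vec\phi}-\tilde U_{\vec\phi} = \sum_{k=1}^{d} (H_{k-1}-H_k).
\]
Each difference $H_{k-1}-H_k$ factors as $A_k\,(U_A-\tilde U_A)\,B_k$, up to an adjoint on the middle factor. Here $A_k$ and $B_k$ are products of unitaries: phase gates together with exact or approximate block-encodings. We take $\tilde U_A$ to be unitary, as it is a circuit; if it were only a contraction the argument would go through unchanged, since we only need $\|A_k\|,\|B_k\|\le 1$. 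Submultiplicativity then gives $\|H_{k-1}-H_k\|\le \varepsilon_{\mathrm{BE}}$, and summing over $k$ yields $\|U_{\vec\phi}-\tilde U_{\vec\phi}\|\le d\,\varepsilon_{\mathrm{BE}}$.

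\emph{Step 3: pass to the encoded block.} The quantity actually used for simulation is the projected block $\tilde\Pi\, U_{\vec\phi}\, \Pi$ or $\Pi\, U_{\vec\phi}\, \Pi$. For the $e^{-iHt}$ construction, this may be a linear combination of the cosine and sine parts with coefficients of modulus at most $1/2$ each. Since projectors have norm at most $1$, the block error is bounded by the full-circuit error, which gives the stated $\varepsilon_{\mathrm{QSVT}}\le d\,\varepsilon_{\mathrm{BE}}$. Adding the polynomial truncation error through the triangle inequality recovers the form of Theorem~\ref{theo:MainQSVTPropagation}.

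\emph{Main obstacle.} The only delicate point is bookkeeping. One must check that the count of block-encoding queries is exactly $d$ (not $d+1$) for the chosen convention, and that combining the even and odd (cosine and sine) circuits does not double the constant. If the two are run as separate QSVT sequences and combined with an LCU, I would bound each by $d\,\varepsilon_{\mathrm{BE}}$ and use the $1/2$ weights so that the total remains $d\,\varepsilon_{\mathrm{BE}}$.
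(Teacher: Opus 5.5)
Your proof is correct and uses the same mechanism as the paper's: the sequence contains $d$ queries to the block-encoding, each contributing at most $\varepsilon_{\mathrm{BE}}$, and all other factors are unitary. The difference is that the paper argues via a first-order expansion in $\Delta U=\tilde U_A-U_A$, which only controls the error up to $\mathcal{O}(\varepsilon_{\mathrm{BE}}^2)$ (a naive all-orders expansion gives $(1+\varepsilon_{\mathrm{BE}})^d-1$), whereas your telescoping identity is exact because it uses that $\tilde U_A$ is itself unitary or a contraction, so it proves $\varepsilon_{\mathrm{QSVT}}\le d\,\varepsilon_{\mathrm{BE}}$ outright; your Step~3 bookkeeping for the projected block and the cosine/sine combination is extra care the paper omits.
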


\begin{proof}
Write $\tilde{U}_A = U_A + \Delta U$ with $\|\Delta U\| = \varepsilon_{\mathrm{BE}}$. Expanding the QSVT sequence to first order in $\Delta U$, each application of the block-encoding contributes at most $\varepsilon_{\mathrm{BE}}$ to the total error. Since the sequence contains $d$ uses of the block-encoding and all intermediate operators are unitary, the total error is bounded by $d\,\varepsilon_{\mathrm{BE}}$.
\end{proof}

Finally, we characterize the dispersion of the stochastic Hamiltonian estimator used in the SparSto construction.

\begin{theorem}[Coefficient variance proxy]
\label{theo:EstimatorVariance}
Let
\begin{equation*}
\hat{H} = \sum_{j=1}^L \frac{c_j}{p_j} H_j \,\xi_j,
\end{equation*}
where $\xi_j \sim \mathrm{Bernoulli}(p_j)$ independently. Define
\begin{equation*}
\vec{u} = \left( \left(\frac{1}{p_1} - 1\right)c_1^2, \ldots, \left(\frac{1}{p_L} - 1\right)c_L^2 \right).
\end{equation*}
Then
\begin{equation*}
\mathrm{Var}_{\mathrm{coeff}}[\hat{H}] := \|\vec{u}\|_1
\end{equation*}
quantifies the dispersion of the coefficients entering the estimator.
\end{theorem}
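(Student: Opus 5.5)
The plan is to compute the second moment of the estimator $\hat H$ directly, using only the independence of the Bernoulli variables $\xi_j$ and the normalization $\|H_j\|\le 1$. This shows that $\|\vec u\|_1$ is exactly the sum of the scalar coefficient variances and that it controls the operator-level fluctuation of $\hat H$ around $H$. Write $\hat H - H = \sum_j c_j H_j\,\eta_j$ with $\eta_j := \xi_j/p_j - 1$. The $\eta_j$ are independent with $\mathbb{E}[\eta_j]=0$ and
\begin{equation*}
\mathbb{E}[\eta_j^2] = \frac{1}{p_j^2}\,p_j - 1 = \frac{1}{p_j}-1 .
\end{equation*}
Hence the scalar coefficient $c_j\xi_j/p_j$ has variance $(1/p_j-1)c_j^2 = u_j$, and summing over $j$ gives $\sum_j \mathrm{Var}[c_j\xi_j/p_j] = \|\vec u\|_1$, since every $u_j\ge 0$ because $0<p_j\le 1$. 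This identifies $\mathrm{Var}_{\mathrm{coeff}}[\hat H]$ as the total coefficient variance and establishes the first half of the statement.

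Next, the proxy should bound the operator fluctuation. Expanding the second moment, the cross terms vanish by independence and zero mean, leaving
\begin{equation*}
\mathbb{E}\big[(\hat H-H)^2\big] = \sum_j c_j^2\,\mathbb{E}[\eta_j^2]\,H_j^2 .
\end{equation*}
Using $\|H_j\|\le 1$ gives $\|\mathbb{E}[(\hat H-H)^2]\|\le \|\vec u\|_1$, with equality when the $H_j$ are Pauli strings, since then $H_j^2=I$. In the same way, for any unit vector $\psi$,
\begin{equation*}
\mathbb{E}\|(\hat H-H)\psi\|^2 \le \|\vec u\|_1 .
\end{equation*}
By Jensen's inequality, the typical deviation is therefore $\mathcal{O}(\sqrt{\|\vec u\|_1})$.

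Finally, connect this to the block-encoding. After normalization by the one-norm, the \textsc{Prepare} amplitudes of the sparsified LCU differ from the exact ones by an amount controlled by $\hat H - H$. Theorem~\ref{theo:LCUPropagationUnitary} then converts that deviation into a block-encoding error at most linear in it, which yields the scaling $\varepsilon_{\mathrm{BE}}=\mathcal{O}(\sqrt{\|\vec u\|_1})$ quoted in Eq.~\eqref{eq:be_stochastic_scaling}.

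The main obstacle is this last step, because the statement is framed as a proxy rather than an exact bound. First, $\hat H$ has a random one-norm $\hat\lambda=\sum_j |c_j|\xi_j/p_j$, so the normalization of the block-encoding fluctuates. Second, the relevant norm is the worst case over states rather than the mean-square. I would control $\hat\lambda$ by the same variance computation, $\mathrm{Var}[\hat\lambda]=\|\vec u\|_1$, and treat the deviation only to first order. I would state explicitly that a rigorous spectral-norm bound would need a matrix concentration inequality (e.g.\ matrix Bernstein), which carries an extra logarithmic dimension factor. The theorem itself, being definitional, follows fully from the first two steps.
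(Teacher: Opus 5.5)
Your proposal is correct and follows the same route as the paper: a direct second-moment computation in which the independence of the $\xi_j$ kills the cross terms, leaving $\mathbb{E}[\hat H^2]-H^2=\sum_j c_j^2(1/p_j-1)H_j^2$. You are slightly more careful than the paper, which silently sets $H_j^2=I$ (Pauli strings) to get a scalar, whereas you keep $H_j^2$ and bound it via $\|H_j\|\le 1$; your final paragraph linking this to $\varepsilon_{\mathrm{BE}}$ is heuristic, as you acknowledge, but the statement is definitional and does not need it.
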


\begin{proof}
Using independence of the Bernoulli variables,
\begin{align*}
\mathbb{E}[\hat{H}] &= H, \\
\mathbb{E}[\hat{H}^2] 
&= \sum_i \frac{c_i^2}{p_i} H_i^2 + \sum_{i \ne j} c_i c_j H_i H_j.
\end{align*}
Subtracting $H^2$ yields
\begin{align*}
\mathbb{E}[\hat{H}^2] - H^2 
&= \sum_i c_i^2\left(\frac{1}{p_i} - 1\right),
\end{align*}
which gives the stated expression.
\end{proof}

We emphasize that $\mathrm{Var}_{\mathrm{coeff}}[\hat{H}]$ is a scalar quantity derived from the coefficients and should not be interpreted as a full operator variance. Rather, it serves as a proxy controlling the magnitude of stochastic errors entering the block-encoding construction.

\section{Statistical characterization of molecular Hamiltonian}
\label{app:StatisticalHam}

In this appendix, we provide an empirical analysis of the coefficient statistics of molecular electronic-structure Hamiltonians and demonstrate that they are well captured by the heavy-tailed random coefficient models employed in the main text. The goal is to justify, quantitatively, the statistical assumptions underlying our random Hamiltonian construction.

\begin{figure}[tbp]
    \centering
    \includegraphics[]{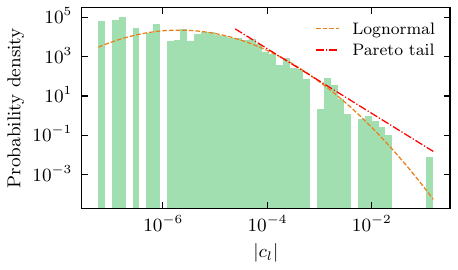}
    \caption{\justifying{\textbf{Coefficient distribution} of the $\mathrm{HCN}$ molecular Hamiltonian. The histogram represents the normalized coefficient magnitudes $|c_l|$, together with a lognormal fit of the bulk of the distribution and a Pareto fit of the tail.}}
    \label{fig:FitHCN}
\end{figure}

\begin{table}[thb]
    \centering
    \begin{tabular}{ccccc}
    \hline
    Molecule  & $L$ & $N_\text{qubits}$ & Lognormal $\sigma^2$ & Pareto $a$\\
    \hline
    $\mathrm{H_2}$ &  15 & 4 & 0.60& 2.14\\
    $\mathrm{H_4}$ &  185 & 8 & 1.20& 0.96\\
    $\mathrm{LiH}$ &  631 & 12 & 3.09& 0.75\\
    $\mathrm{BeH_2}$ &  666 & 14 & 2.62& 0.68\\
    $\mathrm{CH_2}$ &  1086 & 14 & 2.85& 0.68\\
    $\mathrm{H_2O}$ &  1086 & 14 & 2.75& 0.62\\
    $\mathrm{NH_3}$ &  2281 & 16 & 3.92& 0.76\\
    $\mathrm{CH_4}$ &  6892 & 18 & 4.24& 0.63\\
    $\mathrm{C_2}$ &  2951 & 20 & 3.34& 0.61\\
    $\mathrm{O_2}$ &  2239 & 20 & 2.72& 0.63\\
    $\mathrm{N_2}$ &  2951 & 20 & 3.02& 0.63\\
    $\mathrm{HCN}$ &  8758 & 22 & 4.28& 0.54\\
    $\mathrm{H_2O_2}$ &  8865 & 24 & 3.20& 0.59\\
    $\mathrm{C_2H_4}$ &  8919 & 28 & 2.59& 0.64\\
    $\mathrm{CH_2O}$ &  17657 & 24 & 4.80& 0.53\\
    $\mathrm{N_2H_4}$ &  27735 & 28 & 3.02& 0.60\\
    $\mathrm{CO_2}$ &  16170 & 30 & 5.71& 0.49\\
    $\mathrm{C_2H_6}$ &  22337 & 32 & 17.89& 0.55\\
    \hline
    \end{tabular}
 \caption{\justifying{\textbf{Molecular Hamiltonian characteristics.} 
Number of Pauli terms $L$, number of qubits $N_{\mathrm{qubits}}$ after Jordan–Wigner mapping, and fitted parameters of the coefficient-magnitude distributions. For the lognormal model, $\sigma^2=\mathrm{Var}(\log |c_l|)$ quantifies dispersion; for the Pareto-II (Lomax) model, $a$ is the tail shape parameter. All molecules are given in the STO-3G basis.}}
    \label{tab:MoleculeSigma}
\end{table}

For each molecule, the electronic Hamiltonian from the PennyLane datasets~\cite{PennylaneChemistryDatasets} is mapped to qubits via the Jordan--Wigner transformation and rescaled such that $\sum_l |c_l| = 1$, isolating structural features of the coefficient distribution from the overall energy scale. We model the empirical distribution of coefficient magnitudes ${|c_l|}$ using either a lognormal or a Pareto-II (Lomax) distribution. In the lognormal case,
\begin{equation}
\log |c_l| \sim \mathcal{N}(\mu,\sigma^2),
\end{equation}
where $\sigma^2=\mathrm{Var}[\log |c_l|]$ quantifies dispersion. In the Pareto-II case,
\begin{equation}
\mathrm{Pareto\text{-}II}(x;a)=a(1+x)^{-(a+1)},
\end{equation}
with shape parameter $a$ controlling tail heaviness, where smaller $a$ corresponds to stronger dominance by large coefficients. The fitted parameters $\sigma^2$ and $a$ provide quantitative measures of coefficient inhomogeneity, which is central for randomized simulation methods that exploit uneven weight distributions. The values reported in Table~\ref{tab:MoleculeSigma} show pronounced heterogeneity across molecules, with lognormal variances and Pareto shape parameters lying in the same range as those used in our numerical experiments. This statistical consistency indicates that the degree of coefficient dispersion and tail heaviness in our synthetic ensembles closely matches that observed in realistic quantum chemistry Hamiltonians. Consequently, the heavy-tailed random Hamiltonian model in ~\eqref{eq:RandomHamiltonianExtraction} serves as a meaningful proxy for chemically motivated systems when assessing the performance of deterministic and randomized simulation methods.

\section{Gate counting}
\label{app:gate_counts}

The most relevant performance metric is the resource requirement—particularly gate counts—rather than the parameters $r$ or $d$ alone. Many optimization techniques, such as sparsification, primarily affect the proportionality constant between gate count and $r$ or $d$, rather than reducing these parameters directly. In the following, we describe how we count the gates in each cases. 

We assume an all-to-all connected, fault-tolerant quantum device implementing gates from the Clifford+T universal gate set
\[
\text{Clifford+T} = \{\text{S}, \text{H}, \text{CNOT}, \text{T}\},
\]
and focus on the number of CNOT and T gates. CNOT gates are the only multi-qubit gates in the set and are essential for generating entanglement and implementing multi-qubit rotations; they are also among the most expensive gates experimentally~\cite{CNOT_expenses}. T gates serve as the standard metric for fault-tolerant cost, as they require magic-state distillation~\cite{ChildsToward,MagicStateDistillation,SurfaceCodes}. The remaining Clifford gates are not explicitly counted, as they can be efficiently simulated classically~\cite{SimulatingCliffordGates}.

\subsubsection{Rotation decomposition gate count}

Since the Clifford+T set contains only discrete gates, continuous rotations such as $R_z(\theta)$ must be approximated to accuracy $\varepsilon_{\mathrm{deco}}$ using sequences of T, S, and Hadamard gates. We employ the gridsynth (Ross--Selinger) algorithm~\cite{RossSelingerAlgorithm}. Except for angles that are integer multiples of $\pi/4$, the T gate cost is essentially independent of $\theta$ at fixed $\varepsilon_{\mathrm{deco}}$.

To ensure a fair comparison between different algorithms, we choose the decomposition accuracy as a function of the total number of $R_z$ rotations,
\[
\varepsilon_{\mathrm{deco}} = \frac{\varepsilon}{\text{No. of } R_z \text{ rotations}}.
\]
All required rotations are compiled using the implementation of~\cite{RossSelingerAlgorithm}, thereby accounting for the induced T gate overhead.

\subsubsection{Trotter gate count}

From ~\eqref{eq:Trotter1}, the circuits produced by Trotterization, SparSto, and qDRIFT consist of factors of the form $e^{-i\theta H_\ell}$, where $H_\ell$ is a Pauli string of length $K$. Such operators can be compiled using Clifford gates and a single $R_z$ rotation~\cite{Tacchino2019}. Specifically, each $\sigma_x$ and $\sigma_y$ operator in $H_\ell$ is converted to $\sigma_z$ via basis changes using $H$ and $HS$, respectively, yielding a rotation of the form $e^{-i\theta Z^{\otimes K}}$. A CNOT ladder is then used to reduce $Z^{\otimes K}$ to a single $Z$, resulting in an $R_z(\theta)$ rotation. The ladder and basis changes are subsequently uncomputed, and $R_z(\theta)$ is synthesized using gridsynth.

This procedure yields the exact cost of a single Trotter step; multiplying by $r$ gives the total Trotterization cost. For SparSto and qDRIFT, the terms applied at each step are sampled stochastically. However, since the decomposition cost of $R_z(\theta)$ is approximately independent of $\theta$, the average cost per step is obtained by weighting the cost of each term by its sampling probability $p_\ell$.

\subsubsection{\ac{qsvt} gate count}

The compilation of \ac{qsvt} and Sparse \ac{qsvt} circuits in the Clifford+T gate set naturally separates into two contributions: 
(i) the projector-controlled phase operators $\Pi_\phi$ and $\tilde{\Pi}_\phi$, and 
(ii) the \ac{lcu} block-encoding of the Hamiltonian.

For (i), each projector-controlled phase appearing in ~\eqref{eq:ProjectorControlledPhases} can be decomposed as~\cite{GilBeyond}
\begin{equation}
    \begin{split}
        &\mathrm{C_\Pi NOT}\;\otimes\; R_z(\phi)\;\otimes\;\mathrm{C_\Pi NOT}, \text{ with }\\
        &\mathrm{C_\Pi NOT} = X\otimes \Pi + \mathbb{I}\otimes(\mathbb{I}-\Pi).
    \end{split}
\end{equation}
In our construction, the Hamiltonian is block-encoded in the subspace {Select}ed by
\[
\Pi=\tilde{\Pi}=(\ket{0}\bra{0})^{\otimes a},
\]
so that $\mathrm{C_\Pi NOT}$ corresponds to a multi-controlled $X$ gate acting on $a$ ancilla qubits. We decompose this operation into Toffoli gates, each implemented using $6$ CNOTs, $2$ Hadamard gates, and $7$ T gates~\cite{ToffoliImplementation}, together with additional CNOT chains required to implement multiple controls~\cite{MultiControlledX1, MultiControlledX2}. The single-qubit rotation $R_z(\phi)$ is synthesized using the gridsynth (Ross--Selinger) algorithm. As a result, for fixed ancilla register size $a$ and rotation synthesis accuracy $\varepsilon_\text{deco}$, each projector-controlled phase contributes a fixed and explicitly determined number of T and CNOT gates.

For (ii), the \ac{lcu} block-encoding consists of two applications of the state-preparation oracle $V$ and one application of the \textsc{Select} oracle $S$. The oracle $V$ is implemented using the M\"ott\"onen state-preparation algorithm~\cite{MottonenStatePreparation}, whose CNOT and single-qubit rotation counts are known explicitly as functions of the number of ancilla qubits $a=\lceil\log_2 L\rceil$. The \textsc{Select} oracle $S$ implements a structured multi-controlled operation, whose optimized gate complexity is upper-bounded by the binary-tree walk over Boolean products of the control qubits~\cite[Section~G.4, Supporting Information]{ChildsToward}. 

Therefore, for fixed $a$ and QSVT polynomial degree $d$, the total Clifford+T cost of \ac{qsvt} (and Sparse \ac{qsvt}) is obtained by summing the contributions from all $d$ projector-controlled phase operators together with the required instances of the $V$ and $S$ oracles. To account for the sparsified gate count, the number of ancilla qubits is computed from the average number of terms per step $\mu$ as $a = \lceil\log_2\mu\rceil$.

\end{document}